\newtheorem{Definition}{Definition}
\newtheorem{proposition}{Proposition}
\newtheorem{Remark}{Remark}
\newtheorem{theorem}{Theorem}
\newtheorem{Example}{Example}
\begin{document}
	
		\begin{center}
			{\sc \bf On the dispersionless Davey-Stewartson hierarchy: the tau function, the Riemann-Hilbert problem and the Hamilton-Jacobi theory}			
			\vskip 20pt			
			{Ge Yi, Rong Hu, Kelei Tian$^*$ and Ying Xu}			
			\vskip 20pt
			{\it
				School of Mathematics, Hefei University of Technology, Hefei 230601, China
			}

\bigskip
$^*$ Corresponding author:  {\tt kltian@ustc.edu.cn, kltian@hfut.edu.cn}
\bigskip

		\end{center}

\bigskip
\bigskip
\textit{\textbf{Abstract:}} The dDS (dispersionless Davey-Stewartson) hierarchy is constructed by two eigenfunctions of a special Hamiltonian vector field. This hierarchy consists the infinite symmetries of the dDS system. Further, this paper explores the tau function, the Riemann-Hilbert problem and Hamilton-Jacobi theory related to dDS hierarchy.
\bigskip

\textit{\textbf{Keywords:}} dispersionless Davey-Stewartson hierarchy, tau function, Riemann-Hilbert problem, Hamilton-Jacobi theory.
\bigskip
\bigskip

		\section{Introduction}
The DS (Davey-Stewartson) system is one of the most notable (2+1)-dimensional integrable systems. In 1974, a coupled system of nonlinear partial differential equations which describe the evolution of a three-dimensional wave packet in water with a finite depth was derived by Davey and Stewartson using a multi-scale analysis\cite{DavSte}. The general expression of the DS system (parameterized by $\varepsilon >0$) reads as follows
\begin{subequations} \label{DavSte}
\begin{eqnarray}
\textbf{i}\varepsilon q_{t}+\frac{\varepsilon^{2}}{2}(q_{xx}+\sigma^{2} q_{yy})+\delta q \phi=0,
\end{eqnarray}
\begin{eqnarray}
\sigma^{2}\phi_{yy}-\phi_{xx}+(|q|^{2})_{xx}+\sigma^{2}(|q|^{2})_{yy}=0,
\end{eqnarray}
\end{subequations}
in which $x,y,t \in \mathbb{R}$ are the variables of the complex field $q(x,y,t)$ and the real field $\phi(x,y,t)$. We shall refer to (\ref{DavSte}) with $\sigma=1$ the DS-I (Davey-Stewartson-I) system and with $\sigma=\textbf{i}$ the DS-II (Davey-Stewartson-II) system. We also refer to (\ref{DavSte}) with $\delta=1$ the focusing case and  with $\delta=-1$ the defocusing case respectively.
The DS system has undergone thorough study and yielded numerous significant research findings as a typical integrable model.
For DS-II type equations, a productive high-precision numerical method has been proposed in \cite{Klein1}.
It was suggested in \cite{Pogrebkov1}  to use commutator identities on associative algebras as the basis for the method for deriving (2+1)-dimensional nonlinear integrable equations, and this method has now been extended to the standard hierarchy of integrable equations\cite{Pogrebkov1}.
Konopelchenko, Taimanov studied the infinite many symmetries of DS system and pointed out that any symmetry induces an infinite family of geometrically different deformations of tori in $\mathbb{R}^{4}$ preserving the Willmore functional. They defined the DS hierarchy by considering the compatibility of undetermined differential operators in terms of $\partial_{z}$ and $\hat{\partial}_{z}$ \cite{Konopelchenko1,Konopelchenko2,Taimanov} and gave examples of $t_{2}$ and $t_{3}$ flows. Recently, one of the authors  gave a direct expression of the DS hierarchy by two scalar pseudo-differential operators \cite{YiLiao}.

The dispersionless integrable systems are a significant subclass of integrable systems in addition to the classical integrable systems. The dispersionless (semiclassical) limits of the classical integrable systems can sometimes be used to obtain these systems. They are extensively researched and frequently appear in a variety of mathematical physics problems. Jin, Levermore, and McLaughlin have examined the semi-classical limit of the nonlinear Schr\"{o}dinger equation in arbitrary space dimension \cite{Jin1,Jin2}. The semi-classical limit of the nonlinear Schr\"{o}dinger equation in both the defocusing and the focusing cases has been considered by Bronski and McLaughlin in \cite{Bronski1}. Krichever took a deep look into the hierarchy of the dispersionless Lax equations \cite{Krichever1,Krichever2}. Takasaki, Takebe and Guha constructed the solutions of some dispersionless hierarchies and established the dictionary of the twistor geometry and the method of Riemann-Hilbert problem for the dKP hierarchy and the dToda (dispersionless Toda) hierarchy by the twistor theory method\cite{takasaki1,takebe1,takasaki4}.
Gibbons and Kodama studied the solutions of the dKP hierarchy and dToda hierarchy, and developed a more geometric approach by the Hamilton-Jacobi formalism\cite{GibKod,GibKod1,GibKod2,GibKod3}.

The dDS (dispersionless Davey-Stewartson) system and related hierarchy have been first introduced and discussed by Konopelchenko, who applied the dDS hierarchy to describe the highly corrugated surfaces in  $\mathbb{R}^4$ with a slow modulation and propose the quasiclassical generalized Weierstrass representation for such surfaces\cite{konopelchenko2007}. The recent research from one of the authors \cite{yi2018,yi2019} presents that the following dDS system
\begin{subequations} \label{dDS system}
\begin{align}
u_{t}+2(u S_{z})_{z}-2(u S_{\hat{z}})_{\hat{z}}&=0,\\
S_{t}+S^{2}_{z}-S^{2}_{\hat{z}}+\phi&=0,\\
\phi_{z\hat{z}}+2(u_{zz}-u_{\hat{z}\hat{z}})&=0
\end{align}
\end{subequations}
arises from the commutation condition of the Hamiltonian vector fields Lax pair
$$\left[P_{1},P_{2}\right]=0,$$
in which the vector fields operators read as
\begin{subequations} \label{L12}
\begin{align}
P_{1}&=\partial_{\hat{z}}-\{H_{1},\cdot\},\\
P_{2}&=\partial_{t}-\{H_{2},\cdot\},
\end{align}
\end{subequations}
with the Hamiltonians
\begin{subequations}\label{H}
\begin{align}
H_{1}&=S_{\hat{z}}+\frac{u}{p},\label{H1}\\
H_{2}&=(p^{2}-2S_{z} p)+\left[\left(S^{2}_{z}-2 \partial^{-1}_{z} (u_{\hat{z}})\right)+\frac{2u S_{\hat{z}}}{p}+\frac{u^{2}}{p^{2}}\right]. \label{H2}
\end{align}
\end{subequations}
Here and hereafter in this paper, the Poisson bracket is defined as
\begin{equation}
\{A,B\}=\frac{\partial A}{\partial p} \frac{\partial B}{\partial z}-\frac{\partial A}{\partial z}\frac{\partial B}{\partial p}. \nonumber
\end{equation}
This dDS system \eqref{dDS system} is equivalent to the following Zakharov-Shabat equation
$$\frac{\partial H_{1}}{\partial t}-\frac{\partial H_{2}}{\partial \hat{z}}+\{H_{1},H_{2}\}=0.$$
Recently, the dDS system has been studied from the Lie symmetry algebra point of view and some of its exact solutions has been oatained \cite{Gungor}.

The system of integrable partial differential equations is usually related to the hierarchy of partial differential equations that define an infinite number of symmetries. According to \eqref{H}, since the dDS case includes both the positive and negative parts, both the $0$ and $\infty$ are the singular points in the complex $p$-plane. Guha, Takasaki and Takebe have made a series of important and insightful studies for dispersionless integrable hierarchies \cite{takebe1,takasaki1,takasaki2,takasaki3,takasaki4,takasaki5}. We will study the tau function, the Riemann-Hilbert problem and Hamilton-Jacobi theory for the dDS hierarchy based on the previous research \cite{yi2018,yi2019}.

This paper is organized as follows. In section 2, we introduce the dDS hierarchy which includes infinitely many compatible flows, derive the twistor structure and Lax-Sato formlism for the dDS hierarchy, and show some meaningful examples. In section 3, we present the existence for the tau function of the dDS hierarchies. In section 4, we discuss the related Riemann-Hilbert problem. In section 5, we study the Hamilton-Jacobi theory. In section 6, we summarise the results of this paper.

\bigskip
\section{The dispersionless Davey-Stewartson hierarchy}

A new hierarchy of compatible partial differential equations was introduced by one of the authors in \cite{yi2019}. This hierarchy  consists the infinite symmetries of
the dDS system \eqref{dDS system} is called as the dDS hierarchy. In this section, we will review some basic concepts and properties of this dispersionless hierarchy.

As an extension of the dDS system, the construction of the dDS hierarchy is based on two eigenfunctions of this following special Hamiltonian vector field
\begin{eqnarray}
P&=&\partial_{\hat{z}}-\{\hat{H},\cdot\}\nonumber\\
&=&\partial_{\hat{z}}-\{v+\frac{u}{p},\cdot\}  \nonumber\\
&=& \partial_{\hat{z}}+\frac{u}{p^2}\partial_z+(v_z+\frac{u_z}{p})\partial_p,\nonumber
\end{eqnarray}
where $p$ is a complex parameter and  $u=u(t),v=v(t)$ depend on complex variables $t=(t_{mn})$ ($m,n\in \mathbb{N}$, $m+n \geq 1$, and we denote $t_{10} \equiv z,t_{01} \equiv \hat{z}$ in this paper).

Given an arbitrary closed curve $\Gamma$ around the origin of the complex $p$-plane, the two eigenfunctions $\mathcal{L}$ and $\hat{\mathcal{L}}$ of $P$ are described by the two formal Laurent series
\begin{subequations}\label{Ls}
\begin{align}
\mathcal{L}&=p+\sum_{i\le0}f_i(t)p^i,\label{L}\\
\hat{\mathcal{L}}&=\frac{u(t)}{p}+\sum_{i\ge0}g_i(t)p^i.\label{Lhat}
\end{align}
\end{subequations}

\begin{Remark}\label{Remark1}
Here $f_{i}$ and $g_{i}$ both depend on the two real functions $u,v$ and their derivatives or integrals with respect to the independent variables $z,\hat{z}$. Then the coefficients $f^{(m)}_{i}$ and $g^{(n)}_{i}$ of the Laurent expansions
\begin{align}
\mathcal{L}^{m}&=p^{m}+\sum_{i \leq m-1}f^{(m)}_{i} p^{i},\nonumber\\
\hat{\mathcal{L}}^{n}&=\frac{u^{n}}{p^{n}}+\sum_{i \geq 1-n}g^{(n)}_{i} p^{i},\nonumber
\end{align}
also depend on $u,v$ and their derivatives or integrals with respect to the independent variables $z,\hat{z}$.
\end{Remark}
The second Hamiltonian associated with the dDS system (\ref{dDS system}) can be expressed as
$(\mathcal{L}^{2})_{>0}+(\hat{\mathcal{L}}^{2})_{\leq 0}$. By introducing the following Hamiltonian, this concept can be extended to any non-negative integer $m,n$ as
\begin{eqnarray} \label{Hmn}
&&H_{mn}=(\mathcal{L}^{m})_{>0}+(\hat{\mathcal{L}}^{n})_{\leq 0},\nonumber \\
&&H_{m0}=(\mathcal{L}^{m})_{>0},~~H_{0n}=(\hat{\mathcal{L}}^{n})_{\leq 0}.\nonumber
\end{eqnarray}
Here and hereafter in this paper, the symbol $()_{>0}$ stands for extracting the positive powers of $p$ and similarly the symbol $()_{\leq 0}$ stands for extracting the non-positive part.

Then the Zakharov-Shabat equations
\begin{eqnarray} \label{simplezerocurvature}
\frac{\partial \hat{H}}{\partial t_{mn}}-\frac{\partial H_{mn}}{\partial \hat{z}}+\{\hat{H},H_{mn}\}=0
\end{eqnarray}
are equivalent to the following $(2+1)$-dimensional closed systems for unknowns $u$ and $v$,
\begin{subequations} \label{simplesystems}
\begin{eqnarray}
u_{t_{mn}}-(u f^{(m)}_{1})_{z}-g^{(n)}_{-1,\hat{z}}=0,
\end{eqnarray}
\begin{eqnarray}
v_{t_{mn}}+f^{(m)}_{0,\hat{z}}-g^{(n)}_{0,\hat{z}}=0.
\end{eqnarray}
\end{subequations}

The concept of dDS hierarchy is a direct extension of the above special case, which includes a specific Hamiltonian $\hat{H}$.

\begin{Definition}\label{Zerocurvature}
The dDS hierarchy is defined by the Zakharov-Shabat equations
\begin{eqnarray} \label{zerocurvature}
\frac{\partial H_{mn}}{\partial t_{kl}}-\frac{\partial H_{kl}}{\partial t_{mn}}+\{H_{mn},H_{kl}\}=0.
\end{eqnarray}
\end{Definition}


The Zakharov-Shabat equation \eqref{zerocurvature} occurs in the analysis of the  self-dual vacuum Einstein and hyper-K\"{a}hler geometry \cite{takasaki6}. In the study of the vacuum Einstein equation's hyper-K\"{a}hler version, a K\"{a}hler-like 2-form and associated ``Darboux coordinates" play a central role \cite{gindikin,hitchin}.

By introducing an exterior differential 2-form
\begin{equation} \label{omega}
\omega=\sum_{m+n \geq 1} dH_{mn}\wedge dt_{mn}=d p \wedge d z+d \hat{H} \wedge d \hat{z}+\sum_{m+n \geq 2} dH_{mn}\wedge dt_{mn},
\end{equation}
we can obtain the another expression of the dDS hierarchy which resembles the twistor structure.

This 2-form is a degenerate symplectic form because $\omega$ is closed
$$d\omega=0,$$
and the Zakharov-Shabat equations \eqref{zerocurvature} can be cast into a compact form as
$$\omega\wedge\omega=0.$$
These two relations mean that there are two functions $Q$ and $R$, which give a pair of Darbox coordinates
$$\omega=dQ\wedge dR.$$

\begin{Definition}\label{omegadDS}
The dDS hierarchy is equivalent to
\begin{equation}\label{dDS2}
\omega=d\mathcal{L}\wedge d\mathcal{M}=d\hat{\mathcal{L}}\wedge d\hat{\mathcal{M}},
\end{equation}
in which
\begin{subequations} \label{eigen}
\begin{align}
\mathcal{M}&=\sum_{m+n\ge 2}mt_{mn}\mathcal{L}^{m-1}+z+\sum_{i=1}^{\infty}v_{i+1}\mathcal{L}^{-i-1},\label{M}\\
\hat{\mathcal{M}}&=\sum_{m+n\ge 2}nt_{mn}\hat{\mathcal{L}}^{n-1}+\hat{z}+\sum_{i=1}^{\infty}\hat{v}_{i+1}\hat{\mathcal{L}}^{-i-1}.\label{Mhat}
\end{align}
\end{subequations}
The symbol $``d"$  stands for total differentiation in $p,z,\hat{z}$ and $t_{mn}~(m+n \geq 2)$.
\end{Definition}
By considering the definition (\ref{omega}) and the relation (\ref{omegadDS}), one obtains the following third equivalent definition of the dDS hierarchy.

\begin{Definition}\label{Lax}
 The dDS hierarchy consists of the following Lax-Sato equations
\begin{eqnarray} \label{LaxS}
\frac{\partial K}{\partial t_{mn}}=\{H_{mn},K\}
\end{eqnarray}
in which eigenfunctions $K=\mathcal{L},\mathcal{M},\hat{\mathcal{L}},\hat{\mathcal{M}}$ satisfy the following canonical Poisson relations
$$\{\mathcal{L},\mathcal{M}\}=1,\ \ \ \{\hat{\mathcal{L}},\hat{\mathcal{M}}\}=1.$$
\end{Definition}

To the end of this section, we present some meaningful examples of the dDS hierarchy below.
\begin{Example}
By taking $m=0,n=1;k=2,l=2$ and $t_{01}=\hat{z},t_{22}=t$, Hamiltonians read as follows
\begin{subequations}
\begin{eqnarray}
H_{01}= v+\frac{u}{p},
\end{eqnarray}
\begin{eqnarray}
H_{22}&=&(p^{2}+2f_{0} p)+ [(v^{2}+2u g_{1})+\frac{2uv}{p}+\frac{u^{2}}{p^{2}}] \nonumber\\
&\equiv& (p^{2}+2 f p)+ [w+\frac{2uv}{p}+\frac{u^{2}}{p^{2}}].
\end{eqnarray}
\end{subequations}
Then nonlinear system arise from the Zakharov-Shabat equation  (\ref{zerocurvature}) read as
\begin{subequations}
\begin{eqnarray}
2u_{\hat{z}}+w_{z}-2vv_{z}=0,
\end{eqnarray}
\begin{eqnarray}
u_{t}-2(uf)_{z}-2(uv)_{\hat{z}}=0,
\end{eqnarray}
\begin{eqnarray}
v_{t}-2u_{z}-2fv_{z}- w_{\hat{z}}=0,
\end{eqnarray}
\begin{eqnarray}
f_{\hat{z}}+v_{z}=0.
\end{eqnarray}
\end{subequations}
which can be simplified exactly to the  dDS system (\ref{dDS system}) with the choice $v=S_{\hat{z}}$.

\end{Example}
\begin{Example}
With equations \eqref{simplesystems} and \eqref{simplezerocurvature}, choosing the one of the two  Hamiltonian as $\hat{H}=H_{01}$. The integrable flow equation can be obtained simply by \eqref{simplesystems}. Taking another Hamiltonian as $H_{33}$ and letting $v=S_{\hat{z}}$,
the functions $f^{(3)}_{0},f^{(3)}_{1},g^{(3)}_{0}$ and $g^{(3)}_{-1}$ in the expression \eqref{simplesystems} are
\begin{align}
f^{(3)}_{0}&=-S^{3}_{z}+6S_{z} V+\partial^{-1}_{\hat{z}} (u S_{zz}-S_{z\hat{z}} V),\ \ \ f^{(3)}_{1}=3(S^{2}_{z}-V),\nonumber\\
g^{(3)}_{0}&=S^{3}_{\hat{z}}-6S_{\hat{z}} W+\partial^{-1}_{z} (S_{z\hat{z}} W-u S_{\hat{z}\hat{z}}),\ \ \ g^{(3)}_{-1}=3u(S^{2}_{\hat{z}}-W),\nonumber
\end{align}
in which
\begin{eqnarray}\nonumber
W_{z}=u_{\hat{z}},~~~~~~V_{\hat{z}}=u_{z}.
\end{eqnarray}
Then the following system is obtained,
\begin{align}
&u_{t_{33}}-3\left[u(S^{2}_{z}-V)\right]_{z}-3\left[u(S^{2}_{\hat{z}}-W)\right]_{\hat{z}}=0,\nonumber\\
&S_{t_{33}}-(S^{3}_{z}+S^{3}_{\hat{z}})+3(S_{z} V+S_{\hat{z}} W)+3\phi=0,\nonumber\\
&\phi_{z\hat{z}}=(uS_{z})_{zz}+(uS_{\hat{z}})_{\hat{z}\hat{z}},\nonumber
\end{align}
which is analogous to the dDS system (\ref{dDS system}).
\end{Example}

\begin{Remark}
Similar to the Manakov-Santini hierarchy \cite{manakov2,manakov3}, the general unreduced hierarchy without standardized Poisson relations can also be defined by the following Lax-Sato equations
\begin{eqnarray}\nonumber
\mathcal{P}_{mn} (K)=\frac{\partial K}{\partial t_{mn}}-F_{mn} \frac{\partial K}{\partial z}+G_{mn} \frac{\partial K}{\partial p}=0
\end{eqnarray}
for eigenfunctions $K=\mathcal{L},\mathcal{M},\hat{\mathcal{L}},\hat{\mathcal{M}}$ defined in (\ref{eigen}), in which
\begin{align}
&F_{mn}=(m  \mathcal{L}^{m-1} \mathcal{L}_{p} J^{-1})_{\geq 0}+(n \hat{\mathcal{L}}^{n-1} \hat{\mathcal{L}}_{p}\hat{J}^{-1})_{< 0},\nonumber\\
&G_{mn}=(m \mathcal{L}^{m-1} \mathcal{L}_{z} J^{-1})_{>0}+(n \hat{\mathcal{L}}^{n-1} \hat{\mathcal{L}}_{z}\hat{J}^{-1})_{\leq 0},\nonumber
\end{align}
here $J=\{\mathcal{L},\mathcal{M}\},\hat{J}=\{ \hat{\mathcal{L}}, \hat{\mathcal{M}}\}$ and $ J^{-1},\hat{J}^{-1}$ are their formal inverses respectively.
Obviously, this definition is equivalent to
\begin{eqnarray}
[\mathcal{P}_{mn},\mathcal{P}_{kl}]=0.\nonumber
\end{eqnarray}
\end{Remark}

\bigskip
		\section{The tau function}
The tau function encompasses all aspects of integrable systems that can be used to define  solutions for  the whole integrable hierarchy. By applying the definitions of dDS hierarchy introduced in the preceding section, we can discuss the existence of the tau function from the 2-form $\eqref{omega}$.
\begin{theorem}\label{tauT}
There exists the tau function $\tau_{dDS}(t)$ satisfying
\begin{equation}\label{tau}
d \log\tau_{dDS}(t)=\sum_{m+n\ge2}(v_{m+1}+\hat{v}_{n+1})dt_{mn}.
\end{equation}

\end{theorem}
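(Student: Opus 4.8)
The plan is to reduce the statement to an integrability (closedness) condition and then verify it by a residue computation driven by the Lax--Sato equations. Existence of $\tau_{dDS}$ with $d\log\tau_{dDS}=\sum_{m+n\ge2}(v_{m+1}+\hat v_{n+1})\,dt_{mn}$ is equivalent to this $1$-form being closed; once that is established, the Poincaré lemma on a simply connected domain of the time space produces the potential $\log\tau_{dDS}$ and gives \eqref{tau}. So everything comes down to the symmetry $\partial_{t_{kl}}(v_{m+1}+\hat v_{n+1})=\partial_{t_{mn}}(v_{k+1}+\hat v_{l+1})$. First I would read the tails of \eqref{eigen} as formal residues in $p$, namely $v_{m+1}=\mathrm{res}_{\infty}(\mathcal L^{m}\mathcal M\,d\mathcal L)$ and $\hat v_{n+1}=-\mathrm{res}_{0}(\hat{\mathcal L}^{n}\hat{\mathcal M}\,d\hat{\mathcal L})$, where $\mathrm{res}_{\infty},\mathrm{res}_{0}$ extract the coefficient of $p^{-1}$ at $p=\infty$ and $p=0$ (the opposite signs reflect $\mathrm{res}\,d\log\mathcal L=+1$ at $\infty$ but $\mathrm{res}\,d\log\hat{\mathcal L}=-1$ at $0$). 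The computational engine is then differentiation along a flow using \eqref{LaxS} and $\{\mathcal L,\mathcal M\}=1$: since $\partial_{t_{kl}}(\mathcal L^{m}\mathcal M)=\{H_{kl},\mathcal L^{m}\mathcal M\}$, an integration by parts in $p$ inside the residue together with $\{\mathcal L,\mathcal L^{m}\mathcal M\}=\mathcal L^{m}$ collapses the whole expression (writing $d$ for the $p$-differential) to
\[
\frac{\partial v_{m+1}}{\partial t_{kl}}=\mathrm{res}_{\infty}\big(\mathcal L^{m}\,dH_{kl}\big),\qquad \frac{\partial \hat v_{n+1}}{\partial t_{kl}}=-\,\mathrm{res}_{0}\big(\hat{\mathcal L}^{n}\,dH_{kl}\big),
\]
so the required symmetry is exactly the residue identity
\begin{equation*}
\mathrm{res}_{\infty}(\mathcal L^{m}dH_{kl})-\mathrm{res}_{0}(\hat{\mathcal L}^{n}dH_{kl})=\mathrm{res}_{\infty}(\mathcal L^{k}dH_{mn})-\mathrm{res}_{0}(\hat{\mathcal L}^{l}dH_{mn}). \tag{$\star$}
\end{equation*}

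To prove $(\star)$ I would substitute $H_{kl}=(\mathcal L^{k})_{>0}+(\hat{\mathcal L}^{l})_{\le0}$ and, using that $\mathrm{res}(\mathcal L^{m}d\mathcal L^{k})=0=\mathrm{res}(\hat{\mathcal L}^{n}d\hat{\mathcal L}^{l})$ because each integrand is a total $p$-derivative, split each side into two \emph{pure} residues supported at one puncture, $\mathrm{res}_{\infty}(\mathcal L^{m}d(\mathcal L^{k})_{>0})$ and $\mathrm{res}_{0}(\hat{\mathcal L}^{n}d(\hat{\mathcal L}^{l})_{>0})$, and two \emph{mixed} residues, $\mathrm{res}_{\infty}(\mathcal L^{m}d(\hat{\mathcal L}^{l})_{\le0})$ and $-\mathrm{res}_{0}(\hat{\mathcal L}^{n}d(\mathcal L^{k})_{>0})$. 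The pure terms are symmetric under $m\leftrightarrow k$ (resp. $n\leftrightarrow l$): since the doubly-positive and doubly-nonpositive residues vanish, one has $\mathrm{res}_{\infty}(\mathcal L^{m}d(\mathcal L^{k})_{>0})=\mathrm{res}((\mathcal L^{m})_{\le0}d(\mathcal L^{k})_{>0})$, and integration by parts on $d[(\mathcal L^{m})_{\le0}(\mathcal L^{k})_{>0}]$ interchanges $m$ and $k$. The heart of the matter is the mixed terms: because $(\mathcal L^{m})_{>0}$ and $(\hat{\mathcal L}^{l})_{\le0}$ are Laurent polynomials, i.e. global rational functions on $\mathbb P^{1}$ with poles only at $0$ and $\infty$, vanishing of the residue of the exact form $d[(\mathcal L^{m})_{>0}(\hat{\mathcal L}^{l})_{\le0}]$ (together with the vanishing of the purely regular pieces) yields $\mathrm{res}_{\infty}(\mathcal L^{m}d(\hat{\mathcal L}^{l})_{\le0})=-\,\mathrm{res}_{0}(\hat{\mathcal L}^{l}d(\mathcal L^{m})_{>0})$, which is precisely the mixed contribution of the swapped side. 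Hence the mixed parts of the two sides of $(\star)$ coincide, and $(\star)$ follows.

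I expect this mixed-term step to be the main obstacle, as it is the only place where the two singular points $p=0$ and $p=\infty$ must be played against each other and it genuinely uses the residue theorem on $\mathbb P^{1}$ rather than one-point formal identities; it is also the structural reason the construction needs both $(\mathcal L^{m})_{>0}$ and $(\hat{\mathcal L}^{n})_{\le0}$ in $H_{mn}$ and both series \eqref{L}--\eqref{Lhat}. Conceptually this residue bookkeeping is the shadow of the twistorial identity $\omega=d\mathcal L\wedge d\mathcal M=d\hat{\mathcal L}\wedge d\hat{\mathcal M}$ of Definition \ref{omegadDS}, equivalently the closedness of the Liouville $1$-form $\lambda=\mathcal M\,d\mathcal L-\hat{\mathcal M}\,d\hat{\mathcal L}$ (indeed $d\lambda=-\omega+\omega=0$); an alternative derivation of \eqref{tau} extracts the fiberwise residue in $p$ of this single closed form, packaging the pure and mixed cancellations above into one statement.
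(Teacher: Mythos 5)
Your proposal is correct and follows essentially the same route as the paper: the paper likewise reduces the theorem to closedness of the $1$-form, proves the derivative formulas $\partial_{t_{mn}}v_{i+1}=\mathrm{res}\,\mathcal{L}^{i}dH_{mn}$ and $\partial_{t_{mn}}\hat{v}_{i+1}=-\mathrm{res}\,\hat{\mathcal{L}}^{i}dH_{mn}$ from the Lax--Sato equations and the canonical Poisson relations (its Proposition~1), and then verifies the cross-derivative symmetry by exactly your decomposition of $dH_{kl}$ into $(\mathcal{L}^{k})_{>0}$ and $(\hat{\mathcal{L}}^{l})_{\le0}$ with the pure and mixed residue cancellations (its Proposition~2). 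The only differences are presentational: the paper obtains the derivative formula by the chain rule applied to $\hat{\mathcal{M}}$ rather than by differentiating the residue representation $v_{m+1}=\mathrm{res}(\mathcal{L}^{m}\mathcal{M}\,d\mathcal{L})$, and it carries out the final cancellation purely as formal coefficient bookkeeping rather than invoking the residue theorem on $\mathbb{P}^{1}$.
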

This is due to the basic fact that the right hand side of the equation $(\ref{tau})$ is a closed form and we will prove this below.

Firstly, let us introduce the notion of formal residue of 1-forms
$$res\sum a_np^ndp=a_{-1}.$$

For the eigenfunctions $\mathcal{L}$ and $\hat{\mathcal{L}}$, similar to the literature \cite{takasaki1}, the following formulas are also true
\begin{align}
&res\ \mathcal{L}^nd\mathcal{L}=\delta_{n,-1},\nonumber\\
&res\ \hat{\mathcal{L}}^nd\hat{\mathcal{L}}=-\delta_{n,-1},\ \ \ n\in \mathbb{Z}.\nonumber
\end{align}

\begin{proposition}\label{v}
The coefficients $v_i$ and $\hat{v}_i$ from the Orlov eigenfunctions $\mathcal{M}$ and $\hat{\mathcal{M}}$, which are defined by (\ref{eigen}), satisfy
\begin{subequations}
\begin{align}
\frac{\partial \hat{v}_{i+1}}{\partial t_{mn}}=-res\ \hat{\mathcal{L}}^{i}dH_{mn}.\label{dv2}\\
\frac{\partial v_{i+1}}{\partial t_{mn}}=res\ \mathcal{L}^{i}dH_{mn},\label{v1}
\end{align}
\end{subequations}
\end{proposition}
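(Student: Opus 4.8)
The plan is to realize each coefficient $v_{i+1}$ (and $\hat v_{i+1}$) as a formal residue in $p$ and then differentiate it along the flow $t_{mn}$, reducing everything to the residue identities for $\mathcal{L},\hat{\mathcal{L}}$ and the canonical Poisson relations of Definition \ref{Lax}. Throughout, in the residue symbol the differential $d$ means the $p$-differential $\partial_p\,dp$ at fixed times, as in $res\ \mathcal{L}^n d\mathcal{L}=\delta_{n,-1}$. First I would extract the coefficients: from the expansion \eqref{M}, the tail $\sum_{j\ge1}v_{j+1}\mathcal{L}^{-j-1}$ gives $res\ \mathcal{L}^i\big(\sum_j v_{j+1}\mathcal{L}^{-j-1}\big)d\mathcal{L}=\sum_j v_{j+1}\delta_{i,j}=v_{i+1}$ for $i\ge1$, while the non-negative-power part $\sum_{m'+n'\ge2}m't_{m'n'}\mathcal{L}^{m'-1}$ and the $z$-term contribute nothing, since $res\ \mathcal{L}^{i+m'-1}d\mathcal{L}=\delta_{i+m',0}=0$ and $res\ \mathcal{L}^i d\mathcal{L}=\delta_{i,-1}=0$ for $i\ge1$. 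Hence $v_{i+1}=res\ \mathcal{L}^i\mathcal{M}\,d\mathcal{L}$, and by the same computation with $res\ \hat{\mathcal{L}}^n d\hat{\mathcal{L}}=-\delta_{n,-1}$ one gets $\hat v_{i+1}=-res\ \hat{\mathcal{L}}^i\hat{\mathcal{M}}\,d\hat{\mathcal{L}}$, the sign being the source of the sign discrepancy between \eqref{v1} and \eqref{dv2}.

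Next I would differentiate. Since coefficient extraction commutes with $\partial_{t_{mn}}$ and with $\partial_p$,
\[
\frac{\partial v_{i+1}}{\partial t_{mn}}=res\left[\partial_{t_{mn}}(\mathcal{L}^i\mathcal{M})\,\partial_p\mathcal{L}+\mathcal{L}^i\mathcal{M}\,\partial_p(\partial_{t_{mn}}\mathcal{L})\right]dp.
\]
Substituting the Lax--Sato equations $\partial_{t_{mn}}\mathcal{L}=\{H_{mn},\mathcal{L}\}$ and $\partial_{t_{mn}}\mathcal{M}=\{H_{mn},\mathcal{M}\}$ of Definition \ref{Lax}, the Leibniz rule gives $\partial_{t_{mn}}(\mathcal{L}^i\mathcal{M})=\{H_{mn},\mathcal{L}^i\mathcal{M}\}$; I will abbreviate $A=\mathcal{L}^i\mathcal{M}$.

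The key step is an algebraic collapse of the integrand. Because the residue of a total $p$-derivative vanishes, inside $res$ I may rewrite $A\,\partial_p\{H_{mn},\mathcal{L}\}=-\partial_p A\,\{H_{mn},\mathcal{L}\}$; expanding both brackets through $\{f,g\}=\partial_p f\,\partial_z g-\partial_z f\,\partial_p g$, the two terms containing $\partial_z H_{mn}$ cancel and I am left with
\[
\frac{\partial v_{i+1}}{\partial t_{mn}}=res\left[\partial_p H_{mn}\,\{\mathcal{L},A\}\right]dp.
\]
Finally $\{\mathcal{L},A\}=\{\mathcal{L},\mathcal{L}^i\mathcal{M}\}=\mathcal{L}^i\{\mathcal{L},\mathcal{M}\}=\mathcal{L}^i$ by the canonical relation $\{\mathcal{L},\mathcal{M}\}=1$ and $\{\mathcal{L},\mathcal{L}^i\}=0$, so the integrand is $\mathcal{L}^i\partial_p H_{mn}$ and $\partial_{t_{mn}}v_{i+1}=res\ \mathcal{L}^i dH_{mn}$, which is \eqref{v1}. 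The hatted identity \eqref{dv2} follows verbatim with $(\mathcal{L},\mathcal{M})$ replaced by $(\hat{\mathcal{L}},\hat{\mathcal{M}})$, using $\{\hat{\mathcal{L}},\hat{\mathcal{M}}\}=1$, the overall minus sign inherited from $\hat v_{i+1}=-res\ \hat{\mathcal{L}}^i\hat{\mathcal{M}}\,d\hat{\mathcal{L}}$. I expect the only delicate point to be the formal bookkeeping: justifying that $res$ commutes with $\partial_{t_{mn}}$ and $\partial_p$ on these Laurent series and that the discarded total-$p$-derivative genuinely has vanishing residue, so that the cancellation leading to $\partial_p H_{mn}\{\mathcal{L},A\}$ is legitimate.
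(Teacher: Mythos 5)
Your proof is correct and rests on exactly the same ingredients as the paper's: the residue identities $res\ \mathcal{L}^n d\mathcal{L}=\delta_{n,-1}$ and $res\ \hat{\mathcal{L}}^n d\hat{\mathcal{L}}=-\delta_{n,-1}$, the Lax--Sato equations, and the canonical Poisson relations, ending in the same cancellation of the $\partial_z H_{mn}$ terms so that only $\partial_p H_{mn}\{\mathcal{L},\mathcal{L}^i\mathcal{M}\}=\mathcal{L}^i\partial_p H_{mn}$ survives. The only organizational difference is that you first package the coefficient as $v_{i+1}=res\ \mathcal{L}^i\mathcal{M}\,d\mathcal{L}$ (respectively $\hat{v}_{i+1}=-res\ \hat{\mathcal{L}}^i\hat{\mathcal{M}}\,d\hat{\mathcal{L}}$, which cleanly explains the sign asymmetry) and then integrate by parts in $p$, whereas the paper differentiates the expansion of $\hat{\mathcal{M}}$ in powers of $\hat{\mathcal{L}}$ via the chain rule and trades $\left.\frac{\partial\hat{\mathcal{M}}}{\partial\hat{\mathcal{L}}}\right|_{t,\hat v}d\hat{\mathcal{L}}$ for $d\hat{\mathcal{M}}$ inside the residue --- two equivalent ways of performing the same manipulation.
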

\begin{proof}
We only give the proof of the equations $(\ref{dv2})$. By the chain rule of differentiation, one obtains
\begin{equation}\label{v2}
\frac{\partial\hat{\mathcal{M}}}{\partial t_{mn}}=\left.\frac{\partial\hat{\mathcal{M}}}{\partial \hat{\mathcal{L}}}\right|_{t,\hat{v}\ fixed}\frac{\partial\hat{\mathcal{L}}}{\partial t_{mn}}+n\hat{\mathcal{L}}^{n-1}+\sum_{i=1}^{\infty}\frac{\partial\hat{v}_{i+1}}{\partial t_{mn}}\hat{\mathcal{L}}^{-i-1}.\nonumber
\end{equation}
Therefore, using the equations of the eigenfunctions $\mathcal{L}$ and $\hat{\mathcal{L}}$ as well, one has
\begin{align}
 \frac{\partial \hat{v}_{i+1}}{\partial t_{mn}}
 &=-res\ \hat{\mathcal{L}}^i\left(\frac{\partial\hat{\mathcal{M}}}{\partial t_{mn}}-\left.\frac{\partial\hat{\mathcal{M}}}{\partial \hat{\mathcal{L}}}\right|_{t,\hat{v}\ fixed}\frac{\partial\hat{\mathcal{L}}}{\partial t_{mn}}\right) d\hat{\mathcal{L}}\nonumber\\
& =-res\ \hat{\mathcal{L}}^i\left(\frac{\partial\hat{\mathcal{M}}}{\partial t_{mn}}\right) d\hat{\mathcal{L}}+res\ \hat{\mathcal{L}}^i\left(\frac{\partial\hat{\mathcal{L}}}{\partial t_{mn}}\right) d\hat{\mathcal{M}}\nonumber\\
 &=-res\ \hat{\mathcal{L}}^i\left\{H_{mn},\hat{\mathcal{M}}\right\} d\hat{\mathcal{L}}+res\  \hat{\mathcal{L}}^i\left\{H_{mn},\hat{\mathcal{L}}\right\} d\hat{\mathcal{M}}\nonumber\\
 &=-res\ \hat{\mathcal{L}}^i\left(\frac{\partial H_{mn}}{\partial p}\left(\frac{\partial\hat{\mathcal{M}}}{\partial z}\frac{\partial\hat{\mathcal{L}}}{\partial p}-\frac{\partial\hat{\mathcal{M}}}{\partial p}\frac{\partial\hat{\mathcal{L}}}{\partial z}\right)+\frac{\partial H_{mn}}{\partial z}\left(\frac{\partial\hat{\mathcal{M}}}{\partial p}\frac{\partial\hat{\mathcal{L}}}{\partial p}-\frac{\partial\hat{\mathcal{M}}}{\partial p}\frac{\partial\hat{\mathcal{L}}}{\partial p}\right)\right)dp\nonumber\\
 &=-res\ \hat{\mathcal{L}}^idH_{mn},\nonumber
\end{align}
where we have used the Lax equations $(\ref{LaxS})$ and the canonical Possion relation $\{\hat{\mathcal{L}},\hat{\mathcal{M}}\}=1$. The equation $(\ref{v1})$ can be proved similarly.
\end{proof}

\begin{proposition}\label{tauv}
The coefficients $v_{m+1}$ and $\hat{v}_{n+1}$ from the Orlov eigenfunctions $\mathcal{M}$ and $\hat{\mathcal{M}}$,which are defined by (\ref{eigen}), satisfy the condition
\begin{equation}
\frac{\partial(v_{m+1}+\hat{v}_{n+1})}{\partial t_{kl}}=\frac{\partial(v_{k+1}+\hat{v}_{l+1})}{\partial t_{mn}},\ \ \ m,n\in \mathbb{Z}_+.
\end{equation}
\end{proposition}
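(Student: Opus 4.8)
The plan is to push everything through the residue calculus introduced just before the statement, and then to exploit the defining splitting $H_{mn}=(\mathcal{L}^m)_{>0}+(\hat{\mathcal{L}}^n)_{\le0}$. First I would apply Proposition \ref{v} to rewrite both sides: the left-hand side becomes
\begin{equation}
\frac{\partial(v_{m+1}+\hat v_{n+1})}{\partial t_{kl}}=res\ \mathcal{L}^m dH_{kl}-res\ \hat{\mathcal{L}}^n dH_{kl},\nonumber
\end{equation}
and the right-hand side is the same expression with $(m,n)$ and $(k,l)$ interchanged. So the claim reduces to the symmetry, under $(m,n)\leftrightarrow(k,l)$, of $res\ \mathcal{L}^m dH_{kl}-res\ \hat{\mathcal{L}}^n dH_{kl}$. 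It is worth noting that, once Proposition \ref{v} is granted, this is a purely formal identity among the Laurent series $\mathcal{L}^m,\hat{\mathcal{L}}^n,H_{kl}$ at a fixed time $t$; the flow equations enter only through Proposition \ref{v}.

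The two tools I would isolate are the standard residue lemmas: integration by parts, $res\ A\,dB=-res\ B\,dA$ (since $res$ annihilates total $p$-derivatives, $res\,d(AB)=0$), and orthogonality, $res\ A\,dB=0$ whenever $A$ and $B$ are both supported on strictly positive powers of $p$ or both on non-positive powers, equivalently $res\ A\,dB=res\ A_{>0}\,dB_{\le0}+res\ A_{\le0}\,dB_{>0}$. Substituting $H_{kl}=(\mathcal{L}^k)_{>0}+(\hat{\mathcal{L}}^l)_{\le0}$ splits each residue into a ``pure $\mathcal{L}$'' piece, a ``pure $\hat{\mathcal{L}}$'' piece, and two cross pieces. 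For the pure pieces the antisymmetric combination collapses: using orthogonality and then integration by parts one recombines $res\ \mathcal{L}^m d(\mathcal{L}^k)_{>0}-res\ \mathcal{L}^k d(\mathcal{L}^m)_{>0}=res\ \mathcal{L}^m d\mathcal{L}^k=k\,res\ \mathcal{L}^{m+k-1}d\mathcal{L}=0$ by the normalization $res\ \mathcal{L}^n d\mathcal{L}=\delta_{n,-1}$ (for $m,k\ge1$), and likewise $res\ \hat{\mathcal{L}}^n d(\hat{\mathcal{L}}^l)_{\le0}-res\ \hat{\mathcal{L}}^l d(\hat{\mathcal{L}}^n)_{\le0}=res\ \hat{\mathcal{L}}^n d\hat{\mathcal{L}}^l=0$.

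The crux, and the step I expect to be the main obstacle, is matching the cross pieces, which couple the expansion at $p=\infty$ (carried by $\mathcal{L}$) to the one at $p=0$ (carried by $\hat{\mathcal{L}}$). The key identity I would establish is
\begin{equation}
res\ \mathcal{L}^m d(\hat{\mathcal{L}}^l)_{\le0}=-res\ \hat{\mathcal{L}}^l d(\mathcal{L}^m)_{>0},\nonumber
\end{equation}
obtained by using orthogonality to replace $\mathcal{L}^m$ by $(\mathcal{L}^m)_{>0}$ in the left residue, integrating by parts, and then using orthogonality again to restore the full $\hat{\mathcal{L}}^l$. With it, the cross contribution coming from the $\mathcal{L}$-side, $res\ \mathcal{L}^m d(\hat{\mathcal{L}}^l)_{\le0}-res\ \mathcal{L}^k d(\hat{\mathcal{L}}^n)_{\le0}$, turns into exactly $res\ \hat{\mathcal{L}}^n d(\mathcal{L}^k)_{>0}-res\ \hat{\mathcal{L}}^l d(\mathcal{L}^m)_{>0}$, which is the cross contribution from the $\hat{\mathcal{L}}$-side; hence the two sides agree and the antisymmetric part vanishes. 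The point I would stress is that this matching is forced purely by the fact that the projections $()_{>0}$ and $()_{\le0}$ defining $H_{mn}$ are mutually adjoint for the residue pairing, so that—contrary to what one might first expect—the zero-curvature equations \eqref{zerocurvature} need not be invoked at this stage; they have already done their work, via the Lax--Sato equations, inside Proposition \ref{v}.
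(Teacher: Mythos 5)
Your proposal is correct and follows essentially the same route as the paper: both reduce the claim via Proposition \ref{v} to the antisymmetry of $res\,(\mathcal{L}^m-\hat{\mathcal{L}}^n)\,dH_{kl}$, split $H_{kl}=(\mathcal{L}^k)_{>0}+(\hat{\mathcal{L}}^l)_{\le 0}$, and use the adjointness of the projections under the residue pairing together with integration by parts to collapse everything to $res\,\mathcal{L}^m d\mathcal{L}^k+res\,\hat{\mathcal{L}}^l d\hat{\mathcal{L}}^n=0$. Your write-up merely makes explicit the cross-term identity that the paper leaves implicit.
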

\begin{proof}
By applying the  Proposition 1, one obtanis
\begin{align}
&\frac{\partial(v_{m+1}+\hat{v}_{n+1})}{\partial t_{kl}}-\frac{\partial(v_{k+1}+\hat{v}_{l+1})}{\partial t_{mn}}\nonumber\\
&=res(\mathcal{L}^m-\hat{\mathcal{L}}^n)dH_{kl}-res(\mathcal{L}^k-\hat{\mathcal{L}}^l)dH_{mn}\nonumber\nonumber\\
&=res((\mathcal{L}^m)_{\ge0}-\hat{(\mathcal{L}}^n)_{\ge0})d(\hat{\mathcal{L}}^l)_{\le-1}+res((\mathcal{L}^m)_{\le-1}-(\hat{\mathcal{L}}^n)_{\le-1})d(\mathcal{L}^k)_{>0}\nonumber\\
&\ \ -res((\mathcal{L}^k)_{\ge0}-(\hat{\mathcal{L}}^l)_{\ge0})d(\hat{\mathcal{L}}^n)_{\le-1}-res((\mathcal{L}^k)_{\le-1}-(\hat{\mathcal{L}})_{\le-1}^l)d(\mathcal{L}^m)_{>0}\nonumber\\
&=res\hat{\mathcal{L}}^l d\hat{\mathcal{L}}^n+res\mathcal{L}^m d\mathcal{L}^k,\nonumber
\end{align}
which vanishes because $m$ and $n$ are positive integers.
\end{proof}
Based on the above facts, we arrive the conclusion that the right hand side of $(\ref{tau})$ is closed. The Theorem 1 shows that the coefficients $u_i$, $\hat{u}_i$, $v_i$ and $\hat{v}_i$ from the two pairs of eigenfunctions $(\mathcal{L},\mathcal{M})$ and $(\hat{\mathcal{L}},\hat{\mathcal{M}})$  can be expressed by this tau function $\tau_{dDS}$. For the case of $n=0$, it reduces exactly to the tau function $\tau_{dKP}$ of dKP hierarchy\cite{takasaki1}.

\bigskip
\section{The Riemann-Hilbert problem}

The study of the twistor construction for the self-dual vacuum Einstein equation and its hyper-K\"{a}hler version \cite{Penrose,Hitchin} is important in geometry and mathematical physics. This theory can be also extended to consider the dDS hierarchy and the key step is to solve the following
\begin{subequations}\label{RHP}
\begin{align}
f(\mathcal{L},\mathcal{M})&=\hat{f}(\hat{\mathcal{L}},\hat{\mathcal{M}}),\\ g(\mathcal{L},\mathcal{M})&=\hat{g}(\hat{\mathcal{L}},\hat{\mathcal{M}}),
\end{align}
\end{subequations}
where $(f,g)$ and $(\hat{f}, \hat{g})$ are two pairs of  holomorphic functions  satisfying the the canonical Poisson relation $\{f,g\}=\{\hat{f},\hat{g}\}=1$.  In this section, we will describe the details of the Riemann-Hilbert problem based on the two pairs of eigenfunctions $(\mathcal{L},\mathcal{M}),\  (\hat{\mathcal{L}},\hat{\mathcal{M}})$ defined in $(\ref{Ls})$ $(\ref{eigen})$.

\begin{theorem}\label{RHPp}
The solutions $(\mathcal{L},\mathcal{M})$ and $(\hat{\mathcal{L}},\hat{\mathcal{M}})$ of the Riemann-Hilbert problem $(\ref{RHP})$ solve the dDS hierarchy. Namely, they satisfy the Lax-Sato equations $(\ref{LaxS})$  and the canonical Poisson relations $\{\mathcal{L},\mathcal{M}\}=\{\hat{\mathcal{L}},\hat{\mathcal{M}}\}=1$. The holomorphic functions  $(f,g,\hat{f},\hat{g})$  are defined as the twistor data of this solution.
\end{theorem}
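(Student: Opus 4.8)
The plan is to establish the two claims separately: the canonical Poisson relations $\{\mathcal{L},\mathcal{M}\}=\{\hat{\mathcal{L}},\hat{\mathcal{M}}\}=1$, and then the Lax-Sato equations \eqref{LaxS}. For the Poisson relations I would first use that $(f,g)$ and $(\hat f,\hat g)$ are canonical. Since $f,g$ depend on $(p,z)$ only through $(\mathcal{L},\mathcal{M})$, the chain rule gives the transformation law $\{f(\mathcal{L},\mathcal{M}),g(\mathcal{L},\mathcal{M})\}=(f_{\mathcal{L}}g_{\mathcal{M}}-f_{\mathcal{M}}g_{\mathcal{L}})\{\mathcal{L},\mathcal{M}\}=\{\mathcal{L},\mathcal{M}\}$ by $\{f,g\}=1$, and likewise $\{\hat f(\hat{\mathcal{L}},\hat{\mathcal{M}}),\hat g(\hat{\mathcal{L}},\hat{\mathcal{M}})\}=\{\hat{\mathcal{L}},\hat{\mathcal{M}}\}$. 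The relations \eqref{RHP} identify the two left-hand sides, whence $\{\mathcal{L},\mathcal{M}\}=\{\hat{\mathcal{L}},\hat{\mathcal{M}}\}$. To pin the common value I would insert the expansions \eqref{Ls}, \eqref{eigen}: writing $\mathcal{M}=z+\Psi(\mathcal{L})$ with $\Psi$ a Laurent series in $\mathcal{L}$ whose coefficients depend on $t$, the terms that are functions of $\mathcal{L}$ alone drop out of $\{\mathcal{L},\cdot\}$, leaving $\{\mathcal{L},\mathcal{M}\}=\mathcal{L}_{p}\,(1+\sum_{i\ge 1}v_{i+1,z}\mathcal{L}^{-i-1})=1+O(p^{-1})$, holomorphic at $p=\infty$ with value $1$; symmetrically $\{\hat{\mathcal{L}},\hat{\mathcal{M}}\}$ is holomorphic at $p=0$. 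Being equal near $\Gamma$, the two expansions glue to a function holomorphic on the whole $p$-sphere, hence constant, and the constant is $1$.

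\emph{Set-up for the Lax-Sato equations.} For each flow I would introduce the defect quantities $\mathcal{A}^{K}:=\partial_{t_{mn}}K-\{H_{mn},K\}$ for $K\in\{\mathcal{L},\mathcal{M}\}$ and the hatted analogues $\hat{\mathcal{A}}^{\hat K}$ for $\hat K\in\{\hat{\mathcal{L}},\hat{\mathcal{M}}\}$ (suppressing the index $mn$); the goal is that all four vanish. The decisive structural input is $H_{mn}=(\mathcal{L}^{m})_{>0}+(\hat{\mathcal{L}}^{n})_{\le 0}$ together with $\{\mathcal{L}^{m},\mathcal{L}\}=\{\hat{\mathcal{L}}^{n},\hat{\mathcal{L}}\}=0$. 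Writing $(\mathcal{L}^{m})_{>0}=\mathcal{L}^{m}-(\mathcal{L}^{m})_{\le 0}$, a count of powers of $p$ shows that $\mathcal{A}^{\mathcal{L}}$ is holomorphic at $p=\infty$ and $\hat{\mathcal{A}}^{\hat{\mathcal{L}}}$ at $p=0$ (up to a controlled simple pole); after subtracting the explicit polynomial in $p$ produced by $\partial_{t_{mn}}$ acting on the $t_{mn}$-coefficient in \eqref{eigen}, the $\mathcal{M}$-defects are one-sidedly analytic as well. This one-sided analyticity is precisely what the projections in the definition of $H_{mn}$ are designed to provide.

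\emph{Gluing and vanishing.} Differentiating the twistor relations \eqref{RHP} with respect to $t_{mn}$ (the data $f,g,\hat f,\hat g$ carry no explicit $t_{mn}$-dependence) yields two linear relations $f_{\mathcal{L}}\mathcal{A}^{\mathcal{L}}+f_{\mathcal{M}}\mathcal{A}^{\mathcal{M}}=\hat f_{\hat{\mathcal{L}}}\hat{\mathcal{A}}^{\hat{\mathcal{L}}}+\hat f_{\hat{\mathcal{M}}}\hat{\mathcal{A}}^{\hat{\mathcal{M}}}$ and the same with $g,\hat g$. Because $\{f,g\}=\{\hat f,\hat g\}=1$ the Jacobian is invertible, so I can solve the unhatted defects in terms of the hatted ones with coefficients holomorphic on the overlap; each defect therefore continues to a global function on the $p$-sphere that is holomorphic apart from the controlled punctures, and a Liouville/degree argument forces it to vanish. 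Running this for all four eigenfunctions yields \eqref{LaxS}.

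\emph{Main obstacle and an alternative.} The hard step is this final gluing: in contrast with the dKP case, which has a single puncture, the dDS setting is singular at both $p=0$ and $p=\infty$, so the defects must be controlled at the two punctures at once, and the pole/degree bookkeeping must be made tight enough—using the precise orders in \eqref{Ls}, \eqref{eigen} and the normalization from the first step—to exclude a residual term of the shape $a+b/p$. A cleaner route that bypasses part of this is to observe that $\{f,g\}=\{\hat f,\hat g\}=1$ turns \eqref{RHP} into $d\mathcal{L}\wedge d\mathcal{M}=df\wedge dg=d\hat f\wedge d\hat g=d\hat{\mathcal{L}}\wedge d\hat{\mathcal{M}}$; identifying this closed $2$-form with $\omega$ of \eqref{omega} and invoking Definition \ref{omegadDS} then delivers the Lax-Sato equations and the Poisson relations simultaneously, though the identification with the specific $\omega$ built from the $H_{mn}$ again rests on the same two-puncture splitting.
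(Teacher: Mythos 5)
Your proposal follows essentially the same route as the paper: the determinant/chain-rule argument plus the one-sided Laurent expansions (holomorphic at $p=\infty$ with value $1$ versus holomorphic at $p=0$) for the canonical Poisson relations, and the $t_{mn}$-differentiated twistor relations combined with power counting at the two punctures for the Lax--Sato equations. The only real difference is presentational: the paper cancels the twistor-data Jacobians entirely and multiplies by the adjugate of the $(p,z)$-Jacobian of $(\mathcal{L},\mathcal{M})$ (whose entries \emph{are} one-sidedly analytic), identifying the glued bilinear combinations directly with $-\partial_z H_{mn}$ and $-\partial_p H_{mn}$ before solving the $2\times 2$ system --- which is exactly the tightened bookkeeping you flag as the remaining obstacle, since coefficients merely ``holomorphic on the overlap'' would not by themselves globalize the individual defects.
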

\begin{proof}
Firstly
\begin{equation}\label{proof}
\left(
\begin{matrix}
\dfrac{\partial f}{\partial \mathcal{L}} &
\dfrac{\partial f}{\partial \mathcal{M}}\\
\dfrac{\partial g}{\partial \mathcal{L}} &
\dfrac{\partial g}{\partial \mathcal{M}}
\end{matrix}
\right)
\left(
\begin{matrix}
\dfrac{\partial \mathcal{L}}{\partial p} & \dfrac{\partial \mathcal{L}}{\partial z}\\
\dfrac{\partial \mathcal{M}}{\partial p} & \dfrac{\partial \mathcal{M}}{\partial z}
\end{matrix}
\right)
=
\left(
\begin{matrix}
\dfrac{\partial \hat{f}}{\partial \hat{\mathcal{L}}}& \dfrac{\partial \hat{f}}{\partial \hat{\mathcal{M}}}\\
\dfrac{\partial \hat{g}}{\partial \hat{\mathcal{L}}} & \dfrac{\partial \hat{g}}{\partial \hat{\mathcal{M}}}
\end{matrix}
\right)
\left(
\begin{matrix}
\dfrac{\partial \hat{\mathcal{L}}}{\partial p} & \dfrac{\partial \hat{\mathcal{L}}}{\partial z}\\
\dfrac{\partial \hat{\mathcal{M}}}{\partial p} & \dfrac{\partial \hat{\mathcal{M}}}{\partial z}
\end{matrix}
\right).
\end{equation}
By considering the determinant of the both hand sides of this equation and using the relations $$\{f,g\}=\{\hat{f},\hat{g}\}=1, $$
one abstains
$$\{\mathcal{L},\mathcal{M}\}=\{\hat{\mathcal{L}},\hat{\mathcal{M}}\}.$$
Then
\begin{align}
\{\mathcal{L},\mathcal{M}\}&=\frac{\partial \mathcal{L}}{\partial p}\frac{\partial \mathcal{M}}{\partial z}-\frac{\partial \mathcal{L}}{\partial z}\frac{\partial \mathcal{M}}{\partial p}\nonumber\\
&=\frac{\partial \mathcal{L}}{\partial p}\left(\left.\frac{\partial \mathcal{M}}{\partial \mathcal{L}}\right|_{t,v \ fixed}\frac{\partial \mathcal{L}}{\partial z}+1+\sum_{i=1}^{\infty}\frac{\partial v_{i+1}}{\partial z}\mathcal{L}^{-i-1}\right)-\frac{\partial \mathcal{L}}{\partial z}\left.\frac{\partial \mathcal{M}}{\partial \mathcal{L}}\right|_{t,v\ fixed}\frac{\partial\mathcal{L}}{\partial p}\nonumber.
\end{align}
The formal Laurent expansion of $\{\hat{\mathcal{L}},\hat{\mathcal{M}}\}$ contains only nonnegative power of $p$, therefore $$\{\mathcal{L},\mathcal{M}\}=\{\hat{\mathcal{L}},\hat{\mathcal{M}}\}=1.$$
This gives the canonical Poisson relations for the solutions of the equations $(\ref{RHP})$.
Further, differentiating $(\ref{RHP})$ by $t_{mn}$ gives
\begin{equation}\label{proof2}
\left(
\begin{matrix}
\dfrac{\partial f(\mathcal{L},\mathcal{M})}{\partial \mathcal{L}} &
\dfrac{\partial f(\mathcal{L},\mathcal{M})}{\partial \mathcal{M}}\\
\dfrac{\partial g(\mathcal{L},\mathcal{M})}{\partial \mathcal{L}} &
\dfrac{\partial g(\mathcal{L},\mathcal{M})}{\partial \mathcal{M}}
\end{matrix}
\right)
\left(
\begin{matrix}
\dfrac{\partial \mathcal{L}}{\partial t_{mn}} \\
\dfrac{\partial \mathcal{M}}{\partial t_{mn}}
\end{matrix}
\right)
=
\left(
\begin{matrix}
\dfrac{\partial \hat{f}(\hat{\mathcal{L}},\hat{\mathcal{M}})}{\partial \hat{\mathcal{L}}}& \dfrac{\partial \hat{f}(\hat{\mathcal{L}},\hat{\mathcal{M}})}{\partial \hat{\mathcal{M}}}\\
\dfrac{\partial \hat{g}(\hat{\mathcal{L}},\hat{\mathcal{M}})}{\partial \hat{\mathcal{L}}} & \dfrac{\partial \hat{g}(\hat{\mathcal{L}},\hat{\mathcal{M}})}{\partial \hat{\mathcal{M}}}
\end{matrix}
\right)
\left(
\begin{matrix}
\dfrac{\partial \hat{\mathcal{L}}}{\partial t_{mn}}\\
\dfrac{\partial \hat{\mathcal{M}}}{\partial t_{mn}}
\end{matrix}
\right).
\end{equation}
By combining $(\ref{proof})$, we can rewrite the formula $(\ref{proof2})$ as follows
\begin{equation}\label{proof3}
\left(
\begin{matrix}
\dfrac{\partial \mathcal{M}}{\partial z} &
-\dfrac{\partial\mathcal{L}}{\partial z}\\
-\dfrac{\partial\mathcal{M}}{\partial p} &
\dfrac{\partial \mathcal{L}}{\partial p}
\end{matrix}
\right)
\left(
\begin{matrix}
\dfrac{\partial \mathcal{L}}{\partial t_{mn}} \\
\dfrac{\partial \mathcal{M}}{\partial t_{mn}}
\end{matrix}
\right)
=
\left(
\begin{matrix}
\dfrac{\partial \hat{\mathcal{M}}}{\partial z}& -\dfrac{\partial \hat{\mathcal{L}}}{\partial z}\\
-\dfrac{\partial\hat{\mathcal{M}}}{\partial p}& \dfrac{\partial \hat{\mathcal{L}}}{\partial p}
\end{matrix}
\right)
\left(
\begin{matrix}
\dfrac{\partial \hat{\mathcal{L}}}{\partial t_{mn}}\\
\dfrac{\partial \hat{\mathcal{M}}}{\partial t_{mn}}
\end{matrix}
\right).
\end{equation}
The first component of the left hand side for $(\ref{proof3})$ is given by
\begin{align}
&\ \ \dfrac{\partial \mathcal{M}}{\partial z}\dfrac{\partial \mathcal{L}}{\partial t_{mn}}-\dfrac{\partial \mathcal{L}}{\partial z}\dfrac{\partial \mathcal{M}}{\partial t_{mn}}\nonumber\\
&=\left(\left.\dfrac{\partial \mathcal{M}}{\partial \mathcal{L}}\right|_{p,v\ fixed}\dfrac{\partial \mathcal{L}}{\partial z}+1+\sum_{j=1}^{\infty}\dfrac{\partial v_{j+1}}{\partial z}\mathcal{L}^{-j-1}\right)\dfrac{\partial \mathcal{L}}{\partial t_{mn}}\nonumber\\
&\ \ -\left(\left.\dfrac{\partial \mathcal{M}}{\partial \mathcal{L}}\right|_{p,v\ fixed}\dfrac{\partial \mathcal{L}}{\partial t_{mn}}+m\mathcal{L}^{m-1}+\sum_{j=1}^{\infty}\dfrac{\partial v_{j+1}}{\partial t_{mn}}\mathcal{L}^{-j-1}\right)\dfrac{\partial \mathcal{L}}{\partial z}\nonumber\\
&=-\dfrac{\partial \mathcal{L}^m}{\partial z}+\dfrac{\partial \mathcal{L}}{\partial t_{mn}}+\left(\sum_{j=1}^{\infty}\dfrac{\partial v_{j+1}}{\partial t_{mn}}\mathcal{L}^{-j-1}\right)\dfrac{\partial \mathcal{L}}{\partial t_{mn}}-\left(\sum_{j=1}^{\infty}\dfrac{\partial v_{j+1}}{\partial t_{mn}}\mathcal{L}^{-j-1}\right)\dfrac{\partial \mathcal{L}}{\partial z},\nonumber
\end{align}
which contains limited positive powers of $p$. Similarly, the second component of the right hand side of $(\ref{proof3})$ is given by
\begin{align}
&\frac{\partial \hat{\mathcal{M}}}{\partial z}\frac{\partial \hat{\mathcal{L}}}{\partial t_{mn}}-\frac{\partial \hat{\mathcal{L}}}{\partial z}\frac{\partial \hat{\mathcal{M}}}{\partial t_{mn}}\nonumber\\
&=\left(\left.\frac{\partial \hat{\mathcal{M}}}{\partial \hat{\mathcal{L}}}\right|_{p,\hat{v}\ fixed}\frac{\partial \hat{\mathcal{L}}}{\partial z}+\sum_{j=1}^{\infty}\frac{\partial \hat{v}_{j+1}}{\partial z}\hat{\mathcal{L}}^{-j-1}\right)\frac{\partial \hat{\mathcal{L}}}{\partial t_{mn}}\nonumber\\
&\ \ -\left(\left.\frac{\partial \hat{\mathcal{M}}}{\partial \hat{\mathcal{L}}}\right|_{p,\hat{v}\ fixed}\frac{\partial \hat{\mathcal{L}}}{\partial t_{mn}}+n\hat{\mathcal{L}}^{n-1}+\sum_{j=1}^{\infty}\frac{\partial \hat{v}_{j+1}}{\partial t_{mn}}\hat{\mathcal{L}}^{-j-1}\right)\frac{\partial \hat{\mathcal{L}}}{\partial z}\nonumber\\
&=-\frac{\partial \hat{\mathcal{L}}^n}{\partial z}+\left(\sum_{j=1}^{\infty}\frac{\partial \hat{v}_{j+1}}{\partial t_{mn}}\hat{\mathcal{L}}^{-j-1}\right)\frac{\partial \hat{\mathcal{L}}}{\partial t_{mn}}-\left(\sum_{j=1}^{\infty}\frac{\partial \hat{v}_{j+1}}{\partial t_{mn}}\hat{\mathcal{L}}^{-j-1}\right)\frac{\partial \hat{\mathcal{L}}}{\partial z},\nonumber
\end{align}
which contains limited non-positive powers of $p$. By considering both sides of $(\ref{proof3})$, one obtains
$$
\frac{\partial \mathcal{M}}{\partial z}\frac{\partial \mathcal{L}}{\partial t_{mn}}-\frac{\partial \mathcal{L}}{\partial z}\frac{\partial \mathcal{M}}{\partial t_{mn}}=\frac{\partial \hat{\mathcal{M}}}{\partial z}\frac{\partial \hat{\mathcal{L}}}{\partial t_{mn}}-\frac{\partial \hat{\mathcal{L}}}{\partial z}\frac{\partial \hat{\mathcal{M}}}{\partial t_{mn}}=-\frac{\partial (\hat{\mathcal{L}}^n)_{\le -1}}{\partial z}-\frac{\partial (\mathcal{L}^m)_{> 0}}{\partial z},
$$
$$
\frac{\partial \mathcal{M}}{\partial p}\frac{\partial \mathcal{L}}{\partial t_{mn}}-\frac{\partial \mathcal{L}}{\partial p}\frac{\partial \mathcal{M}}{\partial t_{mn}}=\frac{\partial \hat{\mathcal{M}}}{\partial p}\frac{\partial \hat{\mathcal{L}}}{\partial t_{mn}}-\frac{\partial \hat{\mathcal{L}}}{\partial p}\frac{\partial \hat{\mathcal{M}}}{\partial t_{mn}}=-\frac{\partial (\hat{\mathcal{L}}^n)_{\le -1}}{\partial p}-\frac{\partial (\mathcal{L}^m)_{> 0}}{\partial p}.
$$
Then these equations can be readily solved as
\begin{align}
\frac{\partial \mathcal{L}}{\partial t_{mn}}&=\{H_{mn},\mathcal{L}\},\
 \ \ \frac{\partial \hat{\mathcal{L}}}{\partial t_{mn}}=\{H_{mn},\hat{\mathcal{L}}\},\nonumber\\
\frac{\partial \mathcal{M}}{\partial t_{mn}}&=\{H_{mn},\mathcal{M}\},\ \ \ \frac{\partial \hat{\mathcal{M}}}{\partial t_{mn}}=\{H_{mn},\hat{\mathcal{M}}\}.\nonumber
\end{align}
\end{proof}

\bigskip
\section{The Hamilton-Jacobi theory}
It is well known that, in classical mechanics, the Hamilton-Jacobi theory is a way to integrate a Hamiltonian system through an appropriate canonical transformation. The generating function of this transformation, which provides an integration of the original system, is a partial differential equation. The Hamilton-Jacobi theory derives from the analytical mechanics and has been an important method in classical mechanics, theoretical physics, differential equations and differential geometry. In this section, we will continue to discuss the Hamilton-Jacobi theory  related to the dDS hierarchy.

Firstly, we define the functions $p(\lambda,\mu,t)=p$, $z(\lambda,\mu,t)=z$ implicitly by
$$\mathcal{L}(p,z,t)=\lambda,\ \ \ \mathcal{M}(p,z,t)=\mu,$$ with the parameters $\lambda$, $\mu$.

Similarly, we can define functions
 $\hat{p}(\hat{\lambda},\hat{\mu},t)=p$, $\hat{z}(\hat{\lambda},\hat{\mu},t)=z$ implicitly
by
$$ \hat{\mathcal{L}}(p,z,t)=\hat{\lambda},\ \ \ \hat{\mathcal{M}}(p,z,t)=\hat{\mu},$$ with the parameters $\hat{\lambda}$, $\hat{\mu}$.

Then we can define  a couple of Hamiltonian systems as described in the following theorem.
\begin{theorem}
Both $(p(\lambda,\mu,t),\ z(\lambda,\mu,t))$ and $(\hat{p}(\hat{\lambda},\hat{\mu},t),\ \hat{z}(\hat{\lambda},\hat{\mu},t))$ satisfy the multi-time Hamiltonian system
\begin{equation}\nonumber
\left\{
\begin{aligned}
\frac{dp}{dt_{mn}}&=\frac{\partial H_{mn}}{\partial z}\\
\frac{dz}{dt_{mn}}&=-\frac{\partial H_{mn}}{\partial p}
\end{aligned}
\right.\ \ ,\ \ \ \
\left\{
\begin{aligned}
\frac{d\hat{p}}{dt_{mn}}&=\frac{\partial H_{mn}}{\partial \hat{z}}\\ \frac{d\hat{z}}{dt_{mn}}&=-\frac{\partial H_{mn}}{\partial \hat{p}}
\end{aligned}
\right.\ \ ,
\end{equation}
with time-dependent Hamiltonians $H_{mn}$.
\end{theorem}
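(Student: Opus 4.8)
The plan is to exploit the implicit definition of $p$ and $z$ as functions of $\lambda,\mu,t$ together with the Lax-Sato equations $(\ref{LaxS})$ and the canonical Poisson relation $\{\mathcal{L},\mathcal{M}\}=1$ guaranteed by Definition \ref{Lax} and Theorem \ref{RHPp}. First I would hold $\lambda$ and $\mu$ fixed and differentiate the two defining identities $\mathcal{L}(p,z,t)=\lambda$ and $\mathcal{M}(p,z,t)=\mu$ with respect to $t_{mn}$ via the chain rule. Writing $\tfrac{d}{dt_{mn}}$ for the total derivative taken at fixed $\lambda,\mu$ and $\tfrac{\partial}{\partial t_{mn}}$ for the explicit derivative at fixed $p,z$, this produces
\begin{align}
0&=\frac{\partial\mathcal{L}}{\partial t_{mn}}+\frac{\partial\mathcal{L}}{\partial p}\frac{dp}{dt_{mn}}+\frac{\partial\mathcal{L}}{\partial z}\frac{dz}{dt_{mn}},\nonumber\\
0&=\frac{\partial\mathcal{M}}{\partial t_{mn}}+\frac{\partial\mathcal{M}}{\partial p}\frac{dp}{dt_{mn}}+\frac{\partial\mathcal{M}}{\partial z}\frac{dz}{dt_{mn}}.\nonumber
\end{align}

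Next I would substitute the Lax-Sato equations $\tfrac{\partial\mathcal{L}}{\partial t_{mn}}=\{H_{mn},\mathcal{L}\}$ and $\tfrac{\partial\mathcal{M}}{\partial t_{mn}}=\{H_{mn},\mathcal{M}\}$, expand the brackets through $\{A,B\}=\tfrac{\partial A}{\partial p}\tfrac{\partial B}{\partial z}-\tfrac{\partial A}{\partial z}\tfrac{\partial B}{\partial p}$, and regroup. Each identity then becomes a linear relation in the two quantities $\tfrac{dp}{dt_{mn}}-\tfrac{\partial H_{mn}}{\partial z}$ and $\tfrac{dz}{dt_{mn}}+\tfrac{\partial H_{mn}}{\partial p}$, whose coefficient matrix is exactly the Jacobian
\begin{equation*}
\begin{pmatrix}\dfrac{\partial\mathcal{L}}{\partial p} & \dfrac{\partial\mathcal{L}}{\partial z}\\ \dfrac{\partial\mathcal{M}}{\partial p} & \dfrac{\partial\mathcal{M}}{\partial z}\end{pmatrix}.
\end{equation*}

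The decisive observation is that the determinant of this matrix equals $\{\mathcal{L},\mathcal{M}\}=1\neq0$, so the resulting homogeneous linear system forces both bracketed quantities to vanish, which gives $\tfrac{dp}{dt_{mn}}=\tfrac{\partial H_{mn}}{\partial z}$ and $\tfrac{dz}{dt_{mn}}=-\tfrac{\partial H_{mn}}{\partial p}$, precisely the stated Hamiltonian equations. The hatted pair $(\hat p,\hat z)$ is handled identically: one differentiates $\hat{\mathcal{L}}(p,z,t)=\hat\lambda$ and $\hat{\mathcal{M}}(p,z,t)=\hat\mu$ at fixed $\hat\lambda,\hat\mu$, invokes the corresponding Lax-Sato equations for $\hat{\mathcal{L}},\hat{\mathcal{M}}$, and uses $\{\hat{\mathcal{L}},\hat{\mathcal{M}}\}=1$ for the nonvanishing of the Jacobian. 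Since $\hat p,\hat z$ denote the same phase-space coordinates $p,z$ merely re-expressed through $(\hat\lambda,\hat\mu)$, the partials $\tfrac{\partial H_{mn}}{\partial\hat z}$ and $\tfrac{\partial H_{mn}}{\partial\hat p}$ coincide with $\tfrac{\partial H_{mn}}{\partial z}$ and $\tfrac{\partial H_{mn}}{\partial p}$, so the same manipulation yields the second Hamiltonian system.

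I expect no genuine analytic obstacle: the whole argument is a single chain-rule computation whose success rests on the one algebraic fact that the Jacobian of $(p,z)\mapsto(\mathcal{L},\mathcal{M})$ is the canonical Poisson bracket and hence equals $1$. The only point demanding care is the bookkeeping — keeping the total $t_{mn}$-derivative at fixed $(\lambda,\mu)$ cleanly distinguished from the explicit partial at fixed $(p,z)$, and correctly identifying the hatted derivatives of $H_{mn}$ with the unhatted ones. This nonvanishing Jacobian is also what legitimizes the implicit inversion defining $p(\lambda,\mu,t)$ and $z(\lambda,\mu,t)$ in the first place.
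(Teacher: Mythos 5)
Your proposal is correct and follows essentially the same route as the paper: differentiate the implicit identities $\mathcal{L}=\lambda$, $\mathcal{M}=\mu$ (and their hatted analogues) in $t_{mn}$, substitute the Lax--Sato equations, and cancel the Jacobian matrix of $(\mathcal{L},\mathcal{M})$ in $(p,z)$, whose determinant is the canonical Poisson bracket. Your version is in fact slightly more explicit than the paper's, which writes the resulting matrix identity but leaves the invertibility of the Jacobian (via $\{\mathcal{L},\mathcal{M}\}=1$) and the identification of the hatted variables with $(p,z)$ unstated.
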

\begin{proof}
In this proof, we abbreviate $(p(\lambda,\mu,t), z(\lambda,\mu,t))$,  $(\hat{p}(\hat{\lambda},\hat{\mu},t), \hat{z}(\hat{\lambda},\hat{\mu},t))$ as $(p(t),z(t))$ , $(\hat{p}(t),\hat{z}(t))$ to avoid complicated symbols.

According to the definition, the following identities are true
\begin{align}
\mathcal{L}(p(t),z(t),t)=\lambda,\ \ \ & \mathcal{M}(p(t),z(t),t)=\mu,\nonumber\\
\hat{\mathcal{L}}(\hat{p}(t),\hat{z}(t),t)=\hat{\lambda},\ \ \ & \hat{\mathcal{M}}(\hat{p}(t),\hat{z}(t),t)=\hat{\mu}.\nonumber
\end{align}
By considering the derivative of $t_{mn}$ for these identities,  one obtains
\begin{align}
&\frac{\partial \mathcal{L}}{\partial p}\frac{dp}{dt_{mn}}+\frac{\partial \mathcal{L}}{\partial z}\frac{dz}{dt_{mn}}+\frac{\partial \mathcal{L}}{\partial t_{mn}}=0,\nonumber\\
&\frac{\partial \mathcal{M}}{\partial p}\frac{dp}{dt_{mn}}+\frac{\partial \mathcal{M}}{\partial z}\frac{dz}{dt_{mn}}+\frac{\partial \mathcal{M}}{\partial t_{mn}}=0,\nonumber\\
&\frac{\partial \hat{\mathcal{L}}}{\partial \hat{p}}\frac{d\hat{p}}{dt_{mn}}+\frac{\partial \hat{\mathcal{L}}}{\partial \hat{z}}\frac{d\hat{z}}{dt_{mn}}+\frac{\partial \hat{\mathcal{L}}}{\partial t_{mn}}=0,\nonumber\\
&\frac{\partial \hat{\mathcal{M}}}{\partial \hat{p}}\frac{d\hat{p}}{dt_{mn}}+\frac{\partial \hat{\mathcal{M}}}{\partial \hat{z}}\frac{d\hat{z}}{dt_{mn}}+\frac{\partial \hat{\mathcal{M}}}{\partial t_{mn}}=0.\nonumber
\end{align}
Then by applying the Lax equations of the dDS hierarchy $(\ref{LaxS})$, we derive the following Hamilton-Jacobi systems
\begin{equation}
\left(
\begin{matrix}
\dfrac{\partial \mathcal{L}}{\partial p} & \dfrac{\partial \mathcal{L}}{\partial z} \\
\dfrac{\partial \mathcal{M}}{\partial p} & \dfrac{\partial \mathcal{M}}{\partial z}
\end{matrix}
\right)
\left(
\begin{matrix}
\dfrac{dp}{dt_{mn}}\\
\dfrac{dz}{dt_{mn}}
\end{matrix}
\right)
=
\left(
\begin{matrix}
\dfrac{\partial \mathcal{L}}{\partial p} & \dfrac{\partial \mathcal{L}}{\partial z} \\
\dfrac{\partial \mathcal{M}}{\partial p} & \dfrac{\partial \mathcal{M}}{\partial z}
\end{matrix}
\right)
\left(
\begin{matrix}
\dfrac{\partial H_{mn}}{\partial z} \\
-\dfrac{\partial H_{mn}}{\partial p}
\end{matrix}
\right),\nonumber
\end{equation}
\begin{equation}
\left(
\begin{matrix}
\dfrac{\partial \hat{\mathcal{L}}}{\partial \hat{p}} & \dfrac{\partial \hat{\mathcal{L}}}{\partial \hat{z}} \\
\dfrac{\partial \hat{\mathcal{M}}}{\partial \hat{p}} & \dfrac{\partial \hat{\mathcal{M}}}{\partial \hat{z}}
\end{matrix}
\right)
\left(
\begin{matrix}
\dfrac{d\hat{p}}{dt_{mn}}\\
\dfrac{d\hat{z}}{dt_{mn}}
\end{matrix}
\right)
=
\left(
\begin{matrix}
\dfrac{\partial \hat{\mathcal{L}}}{\partial \hat{p}} & \dfrac{\partial \hat{\mathcal{L}}}{\partial \hat{z}} \\
\dfrac{\partial \hat{\mathcal{M}}}{\partial \hat{p}} & \dfrac{\partial \hat{\mathcal{M}}}{\partial \hat{z}}
\end{matrix}
\right)
\left(
\begin{matrix}
\dfrac{\partial H_{mn}}{\partial \hat{z}} \\
-\dfrac{\partial H_{mn}}{\partial \hat{p}}
\end{matrix}
\right).\nonumber
\end{equation}

\end{proof}

The trajectories of this Hamiltonian system should form a two-dimensional family because it resides in a two-dimensional phase space. Both sets of parameters $(\lambda,\mu)$ and $(\hat{\lambda},\hat{\mu})$ should have a functional relationship. Let us write the functional relations as
$$\lambda=\bar{f}(\hat{\lambda},\hat{\mu}),\ \ \ \mu=\bar{g}(\hat{\lambda},\hat{\mu}).$$
This implies that four eigenfunctions $(\mathcal{L},\mathcal{M},\hat{\mathcal{L}},\hat{\mathcal{M}})$ satisty the following equations
\begin{equation}\nonumber
\mathcal{L}(p,z,t)=\bar{f}(\hat{\mathcal{L}},\hat{\mathcal{M}}),\ \ \
\mathcal{M}(p,z,t)=\bar{g}(\hat{\mathcal{L}},\hat{\mathcal{M}}).
\end{equation}
Therefore, the Riemann-Hilbert problem can be reproduced from the multi-time Hamiltonian system.

Because there are two different multi-time Hamiltonian trajectories parameterizations, two different canonical transformations $(p,z)$ to the parameterizations $(\lambda,\mu)$ and $(\hat{\lambda},\hat{\mu})$ can be obtained. These two different canonical transformations are defined by two generation functions. One of the canonical transformations is $(p,z)\mapsto(\lambda,\mu)$, which is defined by a generating function $S(\lambda,z,\hat{z},t)$ as
\begin{align}
\frac{\partial S}{\partial \lambda}=\mu,\ \ \ & \frac{\partial S}{\partial z}=p,\nonumber\\
\frac{\partial S}{\partial \hat{z}}=\frac{u}{p},\ \ \ & \frac{\partial S}{\partial t_{mn}}=H_{mn}.\nonumber
\end{align}
The transformed Hamiltonian system with zero Hamiltonians is given by
$$\frac{d\lambda}{d t_{mn}}=0,\ \ \ \frac{d\mu}{dt_{mn}}=0.$$

Similarly, the other canonical transformation $(p,z)\mapsto(\hat{\lambda},\hat{\mu)}$ is defined by a generating function $\hat{S}=\hat{S}(\hat{\lambda},z,\hat{z},t)$ as
\begin{align}
\frac{\partial \hat{S}}{\partial \hat{\lambda}}=\hat{\mu},\ \ \ &\frac{\partial \hat{S}}{\partial z}=p,\nonumber\\
\frac{\partial \hat{S}}{\partial \hat{z}}=\frac{u}{p},\ \ \ &\frac{\partial \hat{S}}{\partial t_{mn}}=H_{mn}.\nonumber
\end{align}
The transformed Hamiltonian system with zero Hamiltonians is given by
$$\frac{d\hat{\lambda}}{d t_{mn}}=0,\ \ \ \frac{d\hat{\mu}}{d t_{mn}}=0.$$
Then the above canonical transformations can be rewritten as
\begin{align}
&dS=\mu d\lambda+pdz+\frac{u}{p}d\hat{z}+\sum_{m+n\ge2}^{\infty}H_{mn}dt_{mn},\nonumber\\
&d\hat{S}=\hat{\mu}d\hat{\lambda}+pdz+\frac{u}{p}d\hat{z}+\sum_{m+n\ge2}^{\infty}H_{mn}dt_{mn},\nonumber
\end{align}
which are actually the $S$ and $\hat{S}$ functions related to the tau functions introduced in  \cite{yi2019}.

\bigskip
\section{Conclusion}
In this paper, the existence of the tau function, the Riemann-Hilbert problem and Hamilton-Jacobi theory for the dDS hierarchy are studied. Based on these results, we will continue to consider some problems related to the twistor theory, exact solutions and algebraic structures and the results will present in the subsequent papers.

\bigskip
\bigskip
\textbf{Acknowledgements:}
This work is supported by the National Natural Science Foundation of China under Grant Nos. 12271136, 12171133 and 12171132, and the Anhui Province Natural Science Foundation No. 2008085MA05.

\bigskip

\end{document}